\def\anylabel{\nicefrac{!}{?}}
\theoremstyle{plain}
\newtheorem{theorem}{Theorem}
\newtheorem{proposition}{Proposition}
\theoremstyle{definition}
\newtheorem{example}{Example}
\newtheorem{definition}{Definition}
\newtheorem{decisionproblem}{Decision Problem}
\newcommand{\mathconstante}[1]{\ensuremath{{\mathrm{#1}}}}
\newcommand{\Constants}{\ensuremath{\call{C}}}
\newcommand{\Variables}{\ensuremath{\call{X}}}
\newcommand{\fonction}[3][]{%
  \ifthenelse{\isempty{#1}}{%
    \ifthenelse{\isempty{#3}}{%
      \ensuremath{{\mathconstante{#2}}}%
    }{%
      \ensuremath{{\mathconstante{#2}(#3)}}%
    }%
  }{%
    \ifthenelse{\isempty{#3}}{%
      \ensuremath{{\mathconstante{#2}_{#1}}}%
    }{%
      \ensuremath{{\mathconstante{#2}_{#1}(#3)}}%
    }%
  }%
}
\newcommand{\termset}[2][]{{\fonction[#1]{T}{#2}}} 
\newcommand{\gsig}[1]{\termset{#1}}
\newcommand{\vsig}[1]{\termset{#1,\Variables}}
\newcommand{\call}[1]{\ensuremath{\mathcal{#1}}}
\newcommand{\Const}[2][]{\fonction[#1]{Const}{#2}}
\newcommand{\rinput}[2][]{\fonction[#1]{input}{#2}}
\newcommand{\routput}[2][]{\fonction[#1]{output}{#2}}
\newcommand{\trace}[2][]{\fonction[#1]{tr}{#2}}
\newcommand{\mesg}[2][]{\fonction[#1]{mesg}{#2}}
\newcommand{\snd}[2][]{\fonction[#1]{snd}{#2}}
\newcommand{\rcv}[2][]{\fonction[#1]{rcv}{#2}}
\newcommand{\sess}[2][]{\fonction[#1]{sess}{#2}}
\newcommand{\body}[2][]{\fonction[#1]{body}{#2}}
\newcommand{\set}[1]{\ensuremath{{\left\lbrace #1 \right\rbrace}}}
\newcommand{\Var}[2][]{\fonction[#1]{Var}{#2}}
\newcommand{\Cons}[2][]{\fonction[#1]{Cons}{#2}}
\newcommand{\Supp}[2][]{\fonction[#1]{Supp}{#2}}
\newcommand{\tq}{\ensuremath{\,|\,}}
\newcommand{\condset}[2]{\set{#1\tq{}#2}}
\newcommand{\unif}{\ensuremath{\stackrel{?}{=}}}
\newcommand{\penc}[2]{\ensuremath{ \text{\rm enc}(#1,#2)}}
\newcommand{\pdec}[2]{\ensuremath{\text{\rm dec}( #1,#2)}}
\newcommand{\paire}[2]{\ensuremath{\left\langle #1,#2\right\rangle}}
\newcommand{\piun}[1]{\ensuremath{\pi_1(#1)}}
\newcommand{\pideux}[1]{\ensuremath{\pi_2(#1)}}
\newcommand{\msg}[2]{\ensuremath{\text{\rm msg}( #1,#2)}}
\newcommand{\payload}[1]{\ensuremath{\text{\rm payload}(#1)}}
\newcommand{\notark}[1]{}
\def\affiliation#1{#1}
\def\streetaddress#1{#1}
\def\institution#1{#1}
\def\state#1{#1}
\def\email#1{\url{#1}}
\def\city#1{#1}
\title{Implementing Security Protocol Monitors}
\author{Yannick Chevalier\\
  \affiliation{\institution{Irit, Universit\'e Paul Sabatier}\\ 
    \streetaddress{}\city{Toulouse} \\
    \state{France}}\\
\email{ychevali@irit.fr}
\and
Micha\"el Rusinowitch\\
  \affiliation{
  \institution{  Lorraine University, Cnrs, Inria}\\
  \streetaddress{}
  \city{Nancy} \\
  \state{France} \\
  \email{michael.rusinowitch@loria.fr}\\
}}
\begin{document}
\maketitle

  \begin{abstract}
    Cryptographic protocols are often specified by narrations,
    \textit{i.e.}, finite sequences of message exchanges that show the
    intended execution of the protocol.
    Another use of narrations is to describe attacks. 
    We propose in this
    paper to compile, when possible, attack describing narrations into
    a set of tests that honest participants can perform to exclude
    these executions.  These tests can be implemented in monitors to
    protect existing implementations from rogue behaviour.
  \end{abstract}

\section{Introduction}

Cryptographic protocols are designed to prescribe message exchanges
between agents in a hostile environment in order to guarantee some
security properties. In particular security properties such as
confidentiality or authentication are violated when there exists an
execution of the protocol in which they do not hold. However it has
often been found that under certain circumstances, and after its
deployment, a protocol failed to adequatly protect its participants.
These circumstances usually involve one or more sessions, and the
participation of a dishonest agent hereafter called the intruder. When
the attack is on a specific implementation of a protocol, its
mitigation usually amounts to fixing this implementation.

However, some attacks are related to the exchanges of messages
prescribed by the protocol, and not in the actual handling of these
messages by participants. In that case, the only recourse is---when
available---to alter the sequence of acceptable messages. This can be
implemented by changing the format of the messages exchanged or by
stopping an execution once it has been detected that the attack may be
under way. We consider in this paper only the second approach, in
which the participants behaviour is altered in order to reject some
possible executions of the protocol.

Let us consider for example the Needham-Schroder Public Key
(\emph{NSPK}) mutual authentication protocol~\cite{NS78} 
%
described by the following sequence of messages between
roles \(A\) and \(B\):
\[
\begin{array}{r@{~:~}l}
  \multicolumn 1l{\ensuremath{A} \textbf{ knows }}& A,B,K_A,K_B,K_A^{-1}\\
  \multicolumn 1l{\ensuremath{B} \textbf{ knows }}& A,B,K_A,K_B,K_B^{-1}\\
  A \to B & \penc{ A , N_A }{K_B} \\
  B \to A & \penc{ N_A ,N_B }{K_A}\\
  A \to B & \penc{ N_B }{ K_B }\\
\end{array}
\]
The attack on this protocol discovered by Lowe~\cite{lowe-nspk} is
described as follows, where  \(I(A)\) denotes the intruder
impersonating the agent \(A\):
\[
\begin{array}{r@{~:~}l}
  A \to I & \penc{ A , N_A }{K_I}\\
  I(A) \to B & \penc{ A , N_A }{K_B}\\
  B \to I(A) & \penc{ N_A , N_B }{K_A}\\
  I \to A & \penc{ N_A , N_B }{K_A}\\
  A \to I & \penc{ N_B }{ K_I }\\
  I(A) \to B & \penc{ N_B }{ K_B} \\
\end{array}
\]
This execution is an attack because \(B\) believes he has participated
in a session with \(A\) whereas \(A\) never exchanged a message with
\(B\).  As is the case in this attack narration, we assume from now on
and without loss of generality~\cite{Lowe99} that in an attack, every
message is sent to or received from the intruder. A fix, proposed
in~\cite{lowe-nspk}, consisting in altering the second message to
include the name of the sender.
The only drawback to such fixes is that implementations of the amended
protocol are not interoperable with implementations of the original
protocol. For widely deployed real-life protocols, interoperability
must be maintained and thus the amended version coexists with the
original one for years, leaving open an attack vector for attackers.

Our proposal aims at keeping the original version, but extended with
additional tests. This extension involves the creation of a monitor for the
actions of  honest participants that furthermore may have access to
some secret pieces of information held by these participants. In the
case of Lowe's attack, this access is unnecessary, as it suffices for
\(B\) to check that the message he receives at the third step is equal
to the message sent by \(A\). 

We present in this paper an algorithm to implement a security protocol monitor. 
Given the input messages that participants are willing to share with the
monitor, it basically amounts to computing  the conditions
to be checked in order  to exclude a given narration from
the possible executions of the 
protocol. 

\paragraph{Related works}
This article is based on the refinement relation between traces
introduced in~\cite{IPL10}. An extension to the case where an attack
can be excluded based on the information in only one session of a
participant has been proposed in~\cite{ASIACCS}. 

By contrast our approach stems from the line of work 
initiated in~\cite{BellaBM03,ArsacBCC09} where
the authors advocate for the prevention of attacks through detection
and eventually retaliation against the attacker. Also,
\cite{FiazzaPV15} presents in more details an architecture in which the
analysis we present in this paper can be conducted with a better
control on the messages, and also introduce the idea of applicative
firewalls for security protocols. 

\paragraph{Outline}  We recall in Sec.~\ref{sec:role} how to represent 
protocols and roles and how to implement them as active frames. 
In Sec.~\ref{sec:monitor} we  formally introduce protocol monitors   to 
control messages and manage  knowledge shared  by collaborating agents, 
in order to detect and block attacks. In Sec.~\ref{sec:generating}
we show how to synthesize monitors from tests that can be derived
automatically. We  conclude in Sec.~\ref{sec:conclusion}.


\section{Role-based Protocol Specifications}
\label{sec:role}
\subsection{Messages and basic operations}

We consider an infinite set of free constants \Constants{} and an
infinite set of variables \Variables. For each signature \call{F}
(\textit{i.e.} a set of function symbols with arities), we denote by
\gsig{\call{F}} (resp.  \vsig{\call{F}}) the set of terms over
\(\call{F}\cup{}\Constants{}\) (resp.
\(\call{F}\cup{}\Constants{}\cup\Variables\)).  The former is called
the set of ground terms over \call{F}, while the latter is simply
called the set of terms over \call{F}.  Variables are denoted by \(x\)
and decorations thereof, but for a distinguished subset \((v_i)_{i\in
  \mathbb{N}}\) employed to denote positions in a sequence. Terms are
denoted by \(s\), \(t\), and finite sets of terms are written
\(E,F,\ldots\), and decorations thereof, respectively.
In a signature \call{F} a \emph{constant} is either a \emph{free
  constant} or a function symbol of arity \(0\) in \call{F}. Given a
term \(t\) we denote by \Var{t} the set of variables occurring in \(t\)
and \Cons{t} the set of free constants occurring in \(t\).  A (ground)
substitution \(\sigma\) is an idempotent mapping from \Variables{} to
\vsig{\call{F}} (\gsig{\call{F}}) and its support \(\Supp{\sigma}=
\condset{x}{\sigma(x)\not=x}\) is a finite set.  The application of a
substitution \(\sigma\) on a term \(t\) (resp.  a set of terms \(E\))
is denoted \(t\sigma\) (resp.  \(E\sigma\)) and is equal to the term
\(t\) (resp. the set of terms \(E\)) where all variables
\(x\in\Supp{\sigma}\) have been replaced by the term \(x\sigma\).

Terms are manipulated by applying \emph{operations} on them. These
operations are defined by a subset of the signature \call{F} called
the \emph{set of public constructors}. A context \(C[v_1,\ldots,v_n]\)
is a term in which \(\Var{ C[v_1,\ldots,v_n]} \subseteq \set{v_1,
  \ldots , v_n }\), \(\Const{ C[v_1,\ldots,v_n]} = \emptyset \), and
all non-variable symbols are public constructors, including possibly
non-free constant.  We will specify the effects of operations on the
messages and the properties of messages by equations. When the index
\(n\) is clear, we omit the possible variables list and denote
contexts \(C\).
An \emph{equational presentation} \(\call{E}=(\call{F},E)\) is defined
by a set \(E\) of equations \(u=v\) with
\(u,v\in\vsig{\call{F}}\). The \emph{equational theory} generated by
\((\call{F},E)\) on \(\vsig{\call{F}}\) is the smallest congruence
containing all instances of axioms in \(E\) (free constants can also
be used for building
instances)~\cite{DBLP:books/el/RV01/DershowitzP01}.  We write \(s =_{\call{E}}t\) as the congruence relation
between two terms \(s\) and \(t\).  By abuse of terminology we also
call \(\call{E}\) the equational theory generated by the presentation
\call{E} when there is no ambiguity.

A \emph{deduction system} is defined by a
triple \(( \call{E},\call{F},\call{F}_p)\) where \call{E} is an
equational presentation on a signature \call{F} and \(\call{F}_p\) a
subset of \emph{public} constructors in \call{F}.

\begin{example}{\textrm{\bf Public key cryptography.}}\label{ex:pubkey}
  For instance the following deduction system models public key
  cryptography:
  \[
  \begin{array}{l}
    (  \set{\pdec{\penc{x}{y}}{y^{-1}} = x },\\
    \set{ \pdec{\_}{\_}, \penc{\_}{\_}, {\_^{-1}} }, \\
  \set{\pdec{\_}{\_},\penc{\_}{\_} } ) 
  \end{array}
  \]
  The equational theory is reduced here to a single equation that
  expresses that one can decrypt a ciphertext when the inverse key is
  available. The inverse function \(\_^{-1}\) is not public, as it
  cannot be computed in reasonable time by participants.
\end{example}
\begin{example}{\textrm{\bf Nonce generation.}}
  \label{ex:nonces}
  Nonces are random values that are critical to the analysis of
  cryptography and cryptographic protocols. To give an agent the
  capacity to generate new values, we assume the existence of an
  infinite set of constants \(\Constants_{\mathcal{N}}\) away from
  \(\Constants\) such that each value in this set can be generated:
  \[
    \mathcal{N}=( \emptyset , \Constants_{\mathcal{N}}, \Constants_{\mathcal{N}} )
  \]
  Note this model makes sense only in the case where the attacker is
  only one agent, or a set of information sharing
  agents~\cite{machiavelli}, as an agent cannot otherwise construct
  nonces generated by another, independent, agent.
\end{example}



\textsl{Test systems.} In order to express  verifications 
performed by an agent on received messages we introduce test systems: 

\begin{definition}{(Test systems)\label{def:unification}}
  Let \(\call{D}\) be a deduction system with an equational theory
  \(\call{E}\).  A \(\call{D}\)-\emph{test system}
  \(S[v_1,\ldots,v_n]\) is a finite set of equations denoted by \((C_i
  \unif{} C'_i)_{i\in\set{1,\ldots,n}}\) with \(\mathcal{D}\)-contexts
  \(C_i[v_1,\ldots,v_n],C'_i[v_1,\ldots,v_n]\).  It is satisfied by a
  substitution \(\sigma\), and we denote by \(\sigma\models{} S[v_1,\ldots,v_n]\), if
  for all \(i\in\set{1,\ldots,n}\) the equality \(C[v_1,\ldots,v_n]_i\sigma
  =_{\call{E}} C'_i[v_1,\ldots,v_n]\sigma\) holds.
\end{definition}
As usual we simply denote a test system \(S\) if the maximal indice
\(n\) is clear from the context.

\subsection{Traces and active frames}
\label{subsec:protocol:description}

We model messages with terms. The sequence of messages received and
sent by a principal is a \emph{trace}, and is thus a finite sequence
of labeled messages:
\begin{definition}{(Trace)\label{def:trace}}
  A \emph{trace} is a finite sequence of messages each with label (or
  polarity) \(!\) or \(?\).  
\end{definition}
Messages with label \(!\) (resp. \(?\)) are said to be ``sent''
(resp.``received''). Given a trace \(\Lambda= \anylabel t_1, \ldots ,
\anylabel t_n\) we write \(?\Lambda\) (resp.  \(!\Lambda\)) as a
short-hand for \(?m_1, \ldots , ?m_n\), (resp. \(!m_1, \ldots ,
!m_n\)).  Given a trace \(\Lambda=\anylabel m_1, \ldots , \anylabel
m_n\) we denote by \(\sigma_\Lambda=\set{v_1\mapsto m_1, v_n\mapsto
  m_n}\) the substitution mapping each variable \(v_i\) to the \(i\)th
message occurring in \(\Lambda\). To simplify notation we also denote
by \(C[v_1,\ldots,v_n] \cdot \Lambda\), or more simply \(C\cdot
\Lambda\), the application of the substitution \(\sigma_\Lambda\) on
the context \(C[v_1,\ldots,v_n]\).  Accordingly, we say that a trace
\(\Lambda\) satisfies an equality \(C_1 = C_2\), and denote it by
\(\Lambda\models C_1 \unif C_2\), whenever \(C_1 \cdot \Lambda
=_{\call{E}} C_2\cdot \Lambda\).

\paragraph{Operations on traces} Let \(\Lambda\) be a trace.  We say
that a \(\Lambda\) is positive (\textit{resp.} negative) if all its
labels are \(!\) (\textit{resp.} \(?\)). We denote $\Lambda^?$
(\textit{resp.}  $\Lambda^!$) the subsequence of \(\Lambda\) of terms
labeled with \(?\) (\textit{resp.}  $!$). We denote \(-\Lambda\) the
trace in which all the labels in \(\Lambda\) are inverted. Finally, we
denote \(\rinput{\Lambda}\) (resp. \(\routput{\Lambda}\)) the trace
$-\Lambda^?$ (resp.  $-\Lambda^!$).

\paragraph{Active frames} An \emph{active frame} represents the
actions of a principal participating in a protocol. It is a sequence
of steps, and at step \(i\) the principal either sends a message,
constructed from the messages received at steps \(j<i\), or receives
and message, and accepts it if it satisfies some tests constructed
from it and messages received at steps \(j<i\). To simplify
exposition, at a step \(i\), we call these messages received at steps
\(j<i\) the messages \emph{already known} at step \(i\), or just
\emph{already known} if the step is clear from context. As in the case
of traces, messages sent are labeled \(!v_i\) (and \(v_i\) is an
output variable) and those received are labeled \(?v_i\) (and \(v_i\)
is an input variable). Since the available contexts depend upon the
deduction system, the notion of active frame is also parameterised by
a deduction system.

\begin{definition}{\label{def:active:frame}}
  Given a deduction system \call{D}, a
  \call{D}-\emph{active frame} is a sequence $(T_i)_{1\le i \le k}$
  where
  $$
  T_i=\left\lbrace
    \begin{array}[c]{l@{\hspace*{-1pt}}r}
      !v_i \text{ with } v_i \unif{} C_i[v_1,\ldots,v_{i-1}] & \text{(send)}\\
      \multicolumn 2c {\text{\bf or}}\\
      ?v_i \text{ with } S_i[v_1,\ldots,v_{i}] & \text{(receive)}
    \end{array}
  \right.
  $$
\end{definition}
Without loss of generality and reusing the above notations, a simple 
recursion shows that we can assume that all variables in
\(C_i[v_1,\ldots,v_{i-1}]\) are labeled with \(?\) at a step \(j<i\),
and that all variables in \(S_i[v_1,\ldots,v_i]\) are labeled with
\(?\) at a step \(j\le i\). Without loss of generality, from now on we
assume that this is the case for all the active frames we consider.

\begin{example}{\label{ex:active:frame}}
  The principal $A$ in the description of the NSPK protocol can be
  modeled by an active frame as follows, with the caveat that we have
  renamed the \(v_i\) variables for more clarity:
$$
\begin{array}{l}
  (? x_{N_A}\text{ with }  \emptyset, ? x_{A}\text{ with }  \emptyset,? x_{B}\text{ with }  \emptyset,? x_{K_A}\text{ with }  \emptyset,? x_{K_B}\text{ with }  \emptyset,? x_{K_A^{-1}}\text{ with }  \emptyset,\\
  ! x_{msg_1} \text{ with } x_{msg_1}\unif \penc{\paire {x_{A}} {x_{N_A}} }{x_{K_B}},\\
  ? x_r \text{ with }  \emptyset \\ 
  ! x_{msg_2} \text{ with } x_{msg_2}\unif \penc{
      \pideux{\pdec{x_r}{x_{K_A^{-1}}}} }{ x_{K_B} })
\end{array}
$$
\end{example}

Algebraically, by describing a principal's
actions, active frames are partial operations on the set of traces and
map a sequence of messages sent by someone else and accepted to the
sequence of received and sent messages by a principal.  We formalize
these notions as follows:

\begin{definition}{\label{def:evaluation}}
  Let \call{D} be a deduction system with equational theory \call{E}.
  Let $\varphi=(T_i)_{1\le i \le n}$ be an active frame, where the
  $T_i$'s are as in Definition~\ref{def:active:frame}, and where the
  input variables are $?v_{\alpha_1},\ldots,?v_{\alpha_k}$. Let
  $\Lambda$ be a positive trace of length \(k\), \(\theta\) be the
  renaming of variables \(\set{v_{\alpha_j} \mapsto v_j}_{1\le j\le k}\),
  and $S$ be the union of the test systems in $\varphi$.  The
  \emph{evaluation} of $\varphi$ on $\Lambda$ is denoted $\varphi\cdot
  \Lambda$. It is defined, and we say that $\varphi$ \emph{accepts}
  $s$, if $S\cdot s$ is satisfiable. In that case, it is the trace
  $(m_1,\ldots, m_n)$ where:
  $$
  m_i=\left\lbrace
    \begin{array}[c]{ll} 
      ! C_i\cdot \theta\sigma_\Lambda & \text{If }v_i\text{ has label ! in } T_i\\
      ?v_i\cdot \theta\sigma_\Lambda & \text{If }v_i\text{ has label ? in }  T_i\\
    \end{array}\right.
  $$
\end{definition}

\begin{example}{\label{ex:frame:application}}
  Let \(\Lambda_A\) be the trace of the principal \(A\) in the
  the specification of the NSPK protocol in the introduction, \(r=\trace{A}\),
  and $\phi_A$ be the active frame of Ex.~\ref{ex:active:frame}. Let
  $M$ be the message $\msg B {\penc{\paire {N_A} {N_b} }{K_A}}$.  We
  have:
  $$ \begin{array}{l}
       \rinput{\Lambda_A}=(! N_A,  ! A,! B,! K_A,! K_B,! K_A^{-1}, 
       !M )
     \end{array} $$
     and $\phi_A\cdot\rinput{\Lambda_A}$ is the trace:
     $$
     \begin{array}{l}
       (?N_A, ? A,?B,?K_A,?K_B,? K_A^{-1}, \\
       !\msg{B}{\penc{\paire {A} {N_A} }{K_B}},\\
       ?M,  !\msg {B} {\penc{ \pideux{\pdec{ \payload{M } }{K_A^{-1}}} }{ K_B }}
  \end{array} 
  $$
  Modulo the equational theory, this trace is equal to:
  $$
  \begin{array}{l}
    (?N_A,  ?A,?B,?K_A,?K_B, ?K_A^{-1},\\
    !\msg{B}{\penc{\paire {A} {N_A} }{K_B}},
    ?M, !\msg {B} {\penc{ N_b }{ K_B }}
  \end{array}
  $$
\end{example}
It is not coincidental that in Ex.~\ref{ex:frame:application} the
traces $\varphi_A\cdot\rinput{\Lambda_A}$ and $\Lambda_A$ are equal as
it means that within the active frame, the sent messages are composed
from received ones in such a way that when someone sends the messages
expected in \(\Lambda_A\), the execution of \(A\) is described by
\(\Lambda_A\). This relation gives us a criterion to define what an
implementation of a trace is.

\begin{definition}{\label{def:implementation}}
  An active frame $\varphi$ is an \emph{ implementation} of a trace
  $\Lambda$ if $\varphi$ accepts $\rinput{\Lambda}$ and
  $\varphi\cdot\rinput{\Lambda}=_{\call{E}}\Lambda$. 
\end{definition} 
If a trace admits an implementation we say this trace is
\emph{executable}.  Conversely we say that a trace \(t\) is an
\emph{execution} of an active frame \(\varphi\) whenever \(\varphi\)
is an implementation of \(t\).

\subsection{Computation of an implementation}
\label{sec:compil:conform}

We present in this section a method, parameterised by the deduction
system \(\mathcal{D}\), to compute an active frame implementing an
executable trace. To build such an implementation we need to compute,
given a message $t$ sent at step \(i\), a \(\call{D}\)-context $C_i$
that evaluates to $t$ when applied to the previously received
messages. This reachability problem is unsolvable in general. Hence we
have to consider systems that admit a reachability algorithm, formally
defined below:

\begin{definition}{}
  Given a deduction system \call{D} with equational theory \call{E}, a
  \emph{\call{D}-reachability algorithm} $\call{A}_{\call{D}}$
  computes, given a positive trace $\Lambda$ of length $n$ and a term
  $t$, a \call{D}-context $\call{A}_\call{D}(s,t) = C[v_1,\ldots,v_n]$
  such that $C\cdot \Lambda =_{\mathcal{E}} t$ iff there exists such a
  context and $\bot$ otherwise.
\end{definition}

For the many theories that admit a reachability algorithm, it can be
employed as an oracle to compute the contexts in sent messages and
therefore to derive an implementation of a trace $s$. We
thus have the following theorem (see a proof in \cite{IPL10}).

\begin{theorem}{\label{theo:executable}}
  If a \call{D}-reachability algorithm exists then it can be
  decided whether a trace $s$ is executable and if so one can compute
  an implementation of $s$.
\end{theorem}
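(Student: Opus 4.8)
The plan is to give a constructive decision procedure that scans the trace \(s = \anylabel t_1, \ldots, \anylabel t_n\) from left to right, building an active frame \(\varphi = (T_i)_{1\le i\le n}\) step by step while maintaining the list of messages received so far. Writing \(\ell_i\) for the label of \(t_i\), at a received step (\(\ell_i = {?}\)) I would turn \(t_i\) into an input step \(?v_i\) and attach the empty test system; since the empty conjunction of equalities is vacuously satisfiable, such a step accepts any incoming message, which is all that Definition~\ref{def:implementation} demands of a bare implementation. At a sent step (\(\ell_i = {!}\)) I would invoke the oracle: letting \(\Lambda_i\) be the positive trace formed by the messages received at steps \(j<i\), I compute \(\call{A}_{\call{D}}(\Lambda_i, t_i)\). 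If it returns a context, I rename its positional variables \(v_1,\ldots,v_m\) to the received-variable indices \(v_{\alpha_1},\ldots,v_{\alpha_m}\) of \(\Lambda_i\) to obtain \(C_i\), and set \(T_i\) to the send step \(!v_i\) with \(v_i \unif C_i\); the renaming \(\theta\) of Definition~\ref{def:evaluation} reconciles indices again at evaluation time. If instead it returns \(\bot\), I stop and declare \(s\) non-executable.

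For soundness, if every sent step succeeds I would verify that the frame so obtained implements \(s\). Feeding it \(\rinput{s}\), whose components are exactly the received messages \(t_{\alpha_1},\ldots,t_{\alpha_k}\), the input variables are bound to those messages, so at each received step the evaluation returns \(?t_i\), matching \(s\); at each sent step it returns \(!\,C_i\cdot\theta\sigma_{\rinput{s}}\), which equals \(t_i\) modulo \(\call{E}\) by the defining property of \(\call{A}_{\call{D}}\). Since the empty tests make \(\varphi\) accept \(\rinput{s}\), Definition~\ref{def:implementation} yields \(\varphi\cdot\rinput{s} =_{\call{E}} s\), so \(\varphi\) implements \(s\) and \(s\) is executable.

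For the converse, and hence for the decision to be correct, I would show that any oracle failure certifies non-executability. Suppose \(s\) admits an implementation \(\varphi'\). Matching lengths and labels in \(\varphi'\cdot\rinput{s} =_{\call{E}} s\) forces \(\varphi'\) to share the send/receive pattern of \(s\); at the sent position \(i\) its context \(C'_i\) may, by the normalisation noted after Definition~\ref{def:active:frame}, be assumed to use only received variables from earlier steps, so evaluation gives \(C'_i\cdot\theta\sigma_{\rinput{s}} =_{\call{E}} t_i\). This exhibits a \(\call{D}\)-context over \(\Lambda_i\) producing \(t_i\), whence \(\call{A}_{\call{D}}(\Lambda_i, t_i)\neq\bot\) by specification. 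Contrapositively, an answer of \(\bot\) at step \(i\) rules out every implementation. As the procedure makes one oracle call per sent step and \(\call{A}_{\call{D}}\) terminates, the scan terminates and decides executability, returning the witnessing frame on success. The main obstacle is exactly this completeness step: one must pin down that trace equality modulo \(\call{E}\) preserves the label pattern and the positions of sent messages, and that the receive-then-send normalisation legitimately restricts admissible send contexts to precisely those the oracle searches over, so that the per-step reachability queries are jointly necessary and not merely sufficient.
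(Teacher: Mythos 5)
Your proposal is correct and follows essentially the same route as the paper, which (deferring details to the cited work \cite{IPL10}) proves the theorem exactly by using the \call{D}-reachability algorithm as an oracle to compute the send contexts, with trivial tests at receive steps. Your completeness direction --- that any implementation, after the receive-only normalisation of contexts, yields per-step witnesses forcing the oracle to succeed --- is the right justification for why an answer of \(\bot\) certifies non-executability, and it fills in precisely what the paper leaves to the reference.
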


\subsection{Computation of a prudent implementation}
\label{sec:compil:prudent}

An implementation does not necessarily checks the conformity of the
messages with the intended patterns, \textit{e.g.}, the active frame in
Ex.~\ref{ex:frame:application} neither checks that $x_r$ is really an
encryption with the public key $x_{K_A}$ of a pair, nor that the first
argument of the encrypted pair has the same value as the nonce
$x_{N_A}$. 

Any of the algorithms proposed so far in the literature for the
compilation of cryptographic protocols would require at least these tests.
We now present an algorithm that computes these kinds of
checks for an arbitrary deduction system. It formalizes a check as an
equation between \call{D}-contexts over messages received so far,
including the initial knowledge. For example, and reusing the
notations of Ex.~\ref{ex:active:frame} it computes that upon reception
of the message the initiator must, among other tests, check the
validity of the equation:
\[ 
  \piun{\pdec{x_r}{x_{K_A^{-1}}}} \unif x_{N_A}
\] 
We formalize in the definition below which traces \(\Lambda'\) are
acceptable by an agent expecting a trace \(\Lambda\). We define the
acceptable traces as the refinements of \(\Lambda\), that is traces
\(\Lambda'\) such that every test system accepting \(\Lambda\) also
accepts \(\Lambda'\).

\begin{definition}{\label{def:refinement}}
  Let \(\Lambda,\Lambda'\)
  be two positive traces of identical length.
  We say that \(\Lambda'\)
  \emph{refines} \(\Lambda\)
  if, for any pair of \(\call{D}\)-contexts
  $(C_1, C_2)$ one has
  $ C_1\cdot \Lambda =C_2\cdot \Lambda$ implies
  $C_1\cdot \Lambda'=C_2\cdot \Lambda'$.
\end{definition}
Consider for example the following traces \(\Lambda\) and
\(\Lambda'\):
\[
  \left\lbrace
    \begin{array}{r@{\ensuremath{=}}l}
      \Lambda' & (!\penc{a}{k},!\penc{a}{k'},!k,!\mathbf{k''},!a)\\
      \Lambda & (!\penc{a}{k},!\penc{a}{k'},!k,!\mathbf{k'},!a)\\
    \end{array}
  \right.
\]
since all equalities that can be checked on \(\sigma\)
can be checked on \(\sigma'\).
Two traces \(s,s'\)
that refine one another are \emph{equivalent}.  This definition is an
adaptation to our setting of the classic notion of static
equivalence~\cite{AbadiC06}.


When the behaviour of a principal is defined by a trace \(\Lambda\),
we expect that any implementation of that principal accepts the trace
\(\rinput{\Lambda}\).  Thus, and as long as only equality tests are
considered, we expect any implementation of the trace \(\Lambda\) to
also accept any refinement of \(\rinput{\Lambda}\). We define a
\emph{prudent implementation} of \(\Lambda\) as an implementation that
only accepts inputs that refine the inputs in \(\Lambda\).

\begin{definition}{\label{def:eq:prudent}}
  An active frame \(\varphi\) is a \emph{prudent implementation} of a
  trace \(\Lambda\) if \(\varphi\) is an implementation of \(\Lambda\)
  and any trace \(\Lambda'\) accepted by \(\varphi\) is a refinement
  of \(\rinput{\Lambda}\).
\end{definition}
As already noted in~\cite{IPL10}, most deduction systems considered in
the context of cryptographic protocols analysis have the property that
it is possible to compute, given a positive trace, a finite set of
context pairs that summarizes all possible equalities. Given a
positive trace $\Lambda$ we denote $P_\Lambda$ the (infinite) set of
context pairs $(C_1,C_2)$ such that $C_1\cdot s= C_2\cdot s$.

\begin{definition}{\label{def:acceptable}}
  A deduction system \call{D} has the \emph{finite basis property} if
  for each positive trace $\Lambda$ one can compute a finite set
  $P_\Lambda^f$ of pairs of \call{D}-contexts such that, for each
  positive trace $\Lambda'$:
  $$
  P_\Lambda \subseteq P_{\Lambda'} \mbox{~~iff~~} P_\Lambda^f \subseteq
  P_{s'}
  $$
\end{definition}
Let us now assume that a deduction system \call{D} has the finite
basis property. There thus exists an algorithm $\call{A}'_{\call{D}}$
that takes a positive trace $\Lambda$ as input, computes a finite set
$P_\Lambda^f$ of context pairs $(C,C')$ and returns as a result the
test system $S_\Lambda:$ $\set{ C \unif C' ~\vert~(C,C')\in
  P_\Lambda^f }$.  For any positive trace \(\Lambda'\) of length $n$,
by definition of $S_\Lambda$ we have that $S_\Lambda\cdot \Lambda'$ is
satisfiable if and only if $s'$ is a refinement of $s$.  We are now
ready to present our algorithm for the compilation of strands into
active frames.

\paragraph{Algorithm} Given a trace \(\Lambda\) and assuming that the
deduction system \(\mathcal{D}\) has a reachability algorithm and the
finite basis property, and let \(\Lambda\) be a trace of length \(n\),
and let us denote \(\Lambda^i\) for \(1 \le i \le n\) the prefix of
length \(i\) of \(\Lambda\), and \(\Lambda(i)\) the \(i\)th element of
\(\Lambda\). We construct a prudent implementation
\(\varphi_\Lambda=(T_i)_{i=1,\ldots,n}\) of \(\Lambda\) as follows:
$$
T_i=\left\lbrace
  \begin{array}[c]{ll}
    ! v_i \text{ with } v_i\unif\call{A}_\call{D}(\Lambda^{i-1},t_i) & 
                                                                       \text{If }\Lambda(i)=! t_i\\
    ? v_i \text{ with } \call{A}_{\call{D}}'(\Lambda^i)& 
                                                         \text{If }\Lambda(i)=?t_i\\
  \end{array}
\right.
$$
By construction we have the following theorem\cite{IPL10}:
\begin{theorem}{\label{th:main}}
  Let \call{D} be a deduction system that has a \call{D}-ground
  reachability algorithm and has the finite basis property.  Then for
  any executable trace \(\Lambda\) one can compute a prudent
  implementation $\varphi_\Lambda$ of \(\Lambda\).
\end{theorem}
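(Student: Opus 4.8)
The plan is to verify that the active frame $\varphi_\Lambda=(T_i)_{i=1,\ldots,n}$ exhibited in the Algorithm paragraph is a prudent implementation of $\Lambda$. This splits into three obligations: that $\varphi_\Lambda$ is computable, that it is an implementation of $\Lambda$ in the sense of Definition~\ref{def:implementation}, and that it is prudent in the sense of Definition~\ref{def:eq:prudent}. Computability is the most direct: the construction invokes the reachability algorithm $\call{A}_\call{D}$ at each send step and the finite-basis algorithm $\call{A}'_\call{D}$ at each receive step, and by hypothesis each call terminates, returning a $\call{D}$-context and a finite test system respectively. As there are $n$ steps, $\varphi_\Lambda$ results from finitely many terminating calls. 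Since $\Lambda$ is executable, Theorem~\ref{theo:executable} guarantees that for every sent $t_i$ a context deriving $t_i$ from the already-known messages exists, so $\call{A}_\call{D}(\Lambda^{i-1},t_i)\neq\bot$ and the send lines are well defined.

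For the implementation obligation I would first establish acceptance. The union $S$ of the test systems of $\varphi_\Lambda$ is the union of the $S_{\Lambda^i}$ over the receive steps; by the finite-basis characterisation recalled just before the Algorithm, applying $S$ to an input trace is satisfiable exactly when that trace refines $\rinput{\Lambda}$. As refinement is reflexive, $\rinput{\Lambda}$ satisfies $S$ and hence $\varphi_\Lambda$ accepts $\rinput{\Lambda}$. I would then check $\varphi_\Lambda\cdot\rinput{\Lambda}=_{\call{E}}\Lambda$ componentwise using Definition~\ref{def:evaluation}: at a receive step the frame binds $v_i$ to the $i$th input, namely the matching received message of $\Lambda$; at a send step the defining context $C_i=\call{A}_\call{D}(\Lambda^{i-1},t_i)$ satisfies $C_i\cdot\Lambda^{i-1}=_{\call{E}}t_i$ by definition of the reachability algorithm, so the reconstructed message equals $t_i$ modulo $\call{E}$. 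Matching the two traces position by position yields the required equality.

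For prudence I would reuse the same characterisation directly. If a trace $\Lambda'$ is accepted by $\varphi_\Lambda$, then $S\cdot\Lambda'$ is satisfiable, so by the ``if and only if'' of the finite-basis property $\Lambda'$ refines $\rinput{\Lambda}$, which is precisely the condition Definition~\ref{def:eq:prudent} demands. Thus every accepted trace is a refinement of the inputs of $\Lambda$ and $\varphi_\Lambda$ is prudent.

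The step I expect to be the main obstacle is the bookkeeping that reconciles the two-polarity trace $\Lambda$ with the single-polarity objects on which $\call{A}_\call{D}$ and $\call{A}'_\call{D}$ operate. The contexts produced at step $i$ must mention only the already-known input variables, yet the Algorithm feeds them the full prefix $\Lambda^i$; I would invoke the without-loss-of-generality normalisation following Definition~\ref{def:active:frame} to ensure each context depends only on variables labelled $?$ at earlier steps, and to confirm that the union of the staged test systems $S_{\Lambda^i}$ captures refinement of the whole input trace $\rinput{\Lambda}$ with neither spurious nor missing tests. Once this alignment is fixed, both the implementation and the prudence claims reduce cleanly to reflexivity of refinement and to the characterising equivalence of the finite-basis property.
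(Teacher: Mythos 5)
Your proposal is correct and takes essentially the same route as the paper: the paper's entire proof of Theorem~\ref{th:main} is the remark that it holds ``by construction'' (deferring details to~\cite{IPL10}), where the construction is exactly the algorithm you verify, with reachability contexts at send steps and finite-basis test systems at receive steps. Your three obligations---computability (using executability so that $\call{A}_\call{D}$ never returns $\bot$), implementation (acceptance via reflexivity of refinement and componentwise equality via the defining property of the reachability contexts), and prudence via the finite-basis ``iff''---together with the polarity/normalisation bookkeeping you flag at the end, are precisely the details the paper leaves to the cited reference.
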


\subsection{Protocol implementation and execution}

It is customary to describe a protocol by giving its intended
execution, either using a message sequence chart or an Alice\&Bob
notation. We note that the same notation is also employed to described
\textit{e.g.} attacks on that protocol. Beyond their syntax, the
characteristic of such description is to associate to a generic
principal (a rôle, in the case of a protocol specification, a
participant in the case of an attack description) a trace describing
its actions, and how these actions interact with the other principal
actions.  This association of a participant with a trace is formalised
by a function mapping \emph{strands}~\cite{strand}, \textit{i.e.}
principals, rôles, etc., to traces. We define a protocol to be just
one such mapping.

\begin{definition}{(Protocol)\label{def:protocol}}
  A \emph{protocol} is a couple \(P=(\Xi_P,\trace[P]{})\) where
  \(\Xi_P\) is a finite set of strands and \(\trace[P]{}\) maps
  \(\Xi_P\) to the set of traces. 
\end{definition}
When a protocol is intended to be a protocol specification, we refer
to strands as the rôles of that protocol (\textit{e.g.} the rôles
\(A\) and \(B\) in the NSPK protocol.  A strand \(\xi\) is
\emph{positive} in a protocol \(P\) if \(\trace[P]{\xi}\) is a
positive trace.  

In the preceding definition the function \(\trace[P]{}\) prescribes
for each role \(\xi\in\Xi_P\) the sequence of actions to be performed
by an agent playing this role in any protocol instance.  In the
following, when there is no ambiguity in the considered protocol, we
identify a strand $\xi $ with its trace $\trace{\xi}$.

We have worked so far on the implementation of the trace of a role in
a protocol, but the definitions lift to the level of an implementation
of a protocol as follows.

\begin{definition}{(Protocol implementation)\label{def:protocol:implementation}}
  An \emph{implementation of a protocol \(P=(\Xi _P,\trace[P]{})\)} is
  a couple \((\Xi_P,\Phi_P)\) where \(\Phi\) maps each role
  \(\xi\in\Xi_P\) to an active frame such that \(\Phi(\xi)\) is an
  implementation of \trace[P]{\xi}. It is \emph{prudent} if moreover for
  each \(\xi\in\Xi_P\),
  \(\Phi(\xi)\) is a prudent implementation of \trace[P]{\xi}.
\end{definition}

From now on we consider only prudent implementations of protocols,
\textit{i.e.} implementations whose execution is 
a refinement of the protocol specification.

\begin{definition}{(Protocol execution)\label{def:protocol:execution}}
  Let \(P=(\Xi_P,\Phi_P)\) be a protocol implementation. A triple
  \(E=(\Xi_E,\trace[E]{},R_E)\) where:
  \begin{enumerate}
  \item \(\Xi_E\) is a set of strands away from \(\Xi_P\);
  \item \((\Xi_E,\trace[E]{})\) is a protocol;
  \item \(R_E:\Xi_E\to \Xi_P \cup \set{I}\).
  \end{enumerate}
  is a \emph{protocol execution} of \(P\) if, for each
  \(\xi\in\Xi_E\), if \(R_E(\xi)\neq I\) then \trace[E]{\xi} is an
  execution of \(\Phi_P(R_E(\xi))\).
\end{definition}
The strand \(I\) denotes an \emph{Intruder} who does not necessarily
follows the directions prescribed by the protocol.  A protocol
execution is \emph{honest} if \(R_E(\Xi_E)\subseteq \Xi_P\). Strands
in \(\Xi_E\) are called the \emph{participants} of the protocol
execution \(E\). The function \(R_E\) maps each (honest) participant
to its rôle in the protocol.


\section{Protocol monitor}
\label{sec:monitor}

To mitigate an attack on a protocol, a monitor has to coordinate the
participants to detect and stop an instance of a known flaw.  This
coordination is built according to data the participants are willing
to share to prevent the attack.  Our monitor construction relies on
the description of the data the participants are willing to share, a
description of the attack, and a description of the expected behaviour
of the participants, and we compute tests (when possible) to
distinguish an instance of the attack from the normal execution.

In Def.~\ref{def:protocol:monitor}, for each participant \(A\),
\(\trace[M]{A}\) contains the same inputs as \(\trace[P]{A}\), and the
messages sent in \(\trace[M]{A}\), are the pieces of data shared by
\(A\) with the monitor.

\begin{definition}{(Protocol Monitor)\label{def:protocol:monitor}}
  Let \(P=(\Xi_P,\trace[P]{})\)
  and \(M=(\Xi_M,\trace[M]{})\)
  be two protocols. We say that \emph{\(M\) is a monitor for \(P\)} if 
  \begin{enumerate*}
  \item \(\Xi_M= \Xi_P\);
  \item \(M\) is executable; and
  \item For each \(\xi\in\Xi_M\) we have
    \(
      \rinput{\trace[M]{\xi}} = \rinput{\trace[P]{\xi}}
    \).
  \end{enumerate*}
\end{definition}


\begin{proposition}{\label{prop:unique:input}}
  Let \(P=(\Xi_P,\trace[P]{})\) be a protocol,
  \(M=(\Xi_P,\trace[M]{})\) be a monitor of \(P\),
  \(I_X=(\Xi_P,\Phi_X)\) be any implementation of \(X\in\set{P,M}\),
  and \(E=(\Xi_E,\Phi_E,R_E)\) be an honest execution of \(I_P\).

  If \(I_P\) is prudent and \(\Phi_{M}(R_E(\xi))\) accepts
  \(\rinput{\trace[E]{\xi}}\), then \(\Phi_{M}(R_E(\xi))\cdot
  \rinput{\trace[E]{\xi}}\) is a refinement of \trace[M]{R_E(\xi)}.
\end{proposition}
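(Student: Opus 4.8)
The plan is to fix a single participant \(\xi\in\Xi_E\), write \(\rho=R_E(\xi)\), and split the argument into two independent parts: a chain of definitional implications that reduces everything to a refinement between the \emph{inputs}, followed by a lemma that propagates that refinement through the active frame \(\Phi_M(\rho)\) to the messages it produces.

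For the first part I would use honesty of \(E\) to get \(\rho\in\Xi_P\) and \(\rho\neq I\), so that by Def.~\ref{def:protocol:execution} the trace \(\trace[E]{\xi}\) is an execution of \(\Phi_P(\rho)\); equivalently \(\Phi_P(\rho)\) is an implementation of \(\trace[E]{\xi}\) and in particular accepts \(\rinput{\trace[E]{\xi}}\). Since \(I_P\) is prudent, \(\Phi_P(\rho)\) is a prudent implementation of \(\trace[P]{\rho}\), so Def.~\ref{def:eq:prudent} forces the accepted input \(\rinput{\trace[E]{\xi}}\) to refine \(\rinput{\trace[P]{\rho}}\). Finally, condition (3) of the monitor Def.~\ref{def:protocol:monitor} gives \(\rinput{\trace[M]{\rho}}=\rinput{\trace[P]{\rho}}\), whence \(\rinput{\trace[E]{\xi}}\) refines \(\rinput{\trace[M]{\rho}}\). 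This is the only place prudence and the monitor hypotheses are used.

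The crux is the auxiliary lemma: \emph{if \(\Lambda'\) refines \(\Lambda\) and an active frame \(\varphi\) accepts both, then \(\varphi\cdot\Lambda'\) refines \(\varphi\cdot\Lambda\).} I would prove it by context composition. By Def.~\ref{def:evaluation} each message of \(\varphi\cdot\Lambda\) has the form \(E_i\cdot\Lambda\) for a fixed \(\call{D}\)-context \(E_i\) depending only on \(\varphi\) (the send context \(C_i\) at output steps, a projection at input steps, each precomposed with the renaming \(\theta\)), and the very same \(E_i\) build \(\varphi\cdot\Lambda'\) out of \(\Lambda'\). Hence for any \(\call{D}\)-contexts \(D_1,D_2\) one has \(D_j\cdot(\varphi\cdot\Lambda)=(D_j[E_1,\ldots,E_n])\cdot\Lambda\), and likewise over \(\Lambda'\), the composite \(D_j[E_1,\ldots,E_n]\) being again a \(\call{D}\)-context. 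Thus an equality \(D_1\cdot(\varphi\cdot\Lambda)=_{\call{E}}D_2\cdot(\varphi\cdot\Lambda)\) is an equality of these composite contexts evaluated on \(\Lambda\); feeding the two composites into the refinement of \(\Lambda'\) over \(\Lambda\) yields the matching equality on \(\varphi\cdot\Lambda'\), which is exactly what refinement of \(\varphi\cdot\Lambda'\) over \(\varphi\cdot\Lambda\) requires.

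To conclude I would instantiate the lemma with \(\varphi=\Phi_M(\rho)\), \(\Lambda'=\rinput{\trace[E]{\xi}}\) and \(\Lambda=\rinput{\trace[M]{\rho}}\): acceptance of \(\rinput{\trace[E]{\xi}}\) is the proposition's hypothesis, and acceptance of \(\rinput{\trace[M]{\rho}}\) holds because \(\Phi_M(\rho)\) is an implementation of \(\trace[M]{\rho}\) (Def.~\ref{def:implementation}). The lemma then gives that \(\Phi_M(\rho)\cdot\rinput{\trace[E]{\xi}}\) refines \(\Phi_M(\rho)\cdot\rinput{\trace[M]{\rho}}\), and the implementation property also yields \(\Phi_M(\rho)\cdot\rinput{\trace[M]{\rho}}=_{\call{E}}\trace[M]{\rho}\). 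Since the relation ``refines'' depends only on the \(\call{E}\)-class of the refined trace (its defining equalities are read modulo the congruence \(\call{E}\)), I may replace the right-hand trace by \(\trace[M]{\rho}\) and obtain the claim. I expect the main obstacle to be the bookkeeping in the composition lemma, namely verifying that substituting the frame's contexts \(E_i\) into an arbitrary test context \(D_j\) again yields a legal \(\call{D}\)-context (closure of public-constructor contexts under composition) and that the renaming \(\theta\) is applied uniformly to both inputs.
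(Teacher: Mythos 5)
Your proof is correct and is essentially the paper's own argument, just run forwards instead of by contraposition: your composition lemma---each message of \(\varphi\cdot\Lambda\) is a fixed \(\call{D}\)-context \(E_i\) applied to \(\Lambda\), so any test pair \(D_1,D_2\) composes into \(\call{D}\)-contexts \(D_j[E_1,\ldots,E_n]\) over the inputs---is precisely the paper's ``without loss of generality \(C_1,C_2\) are built upon the input variables'' step, and your chain of honesty, prudence of \(I_P\), and monitor condition \(\rinput{\trace[M]{\xi}}=\rinput{\trace[P]{\xi}}\) is exactly what the paper uses to reach its final contradiction. The only differences are presentational: the paper assumes a violating context pair on the evaluated trace and pushes it down to the inputs to contradict prudence, whereas you establish input refinement first and lift it through the frame, along the way making explicit two points the paper leaves implicit (the justification of the WLOG step and the fact that refinement is invariant under replacing the refined trace by an \(=_{\call{E}}\)-equal one).
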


\begin{proof}
  Assume there exists \(\xi_R \in\Xi_M\) and \(\xi_e\in\Xi_E\) with
  \(R_E(\xi_e)=\xi_R\) such that \(\Phi_{M}(\xi_R)\cdot
  \rinput{\trace[E]{\xi_e}}\) is not a refinement of \trace[M]{\xi_R}.
  That is, there exists pairs of contexts \(C_1,C_2\) such that
  \(\trace[M]{\xi_r}\models C_1 = C_2\) but \(\Phi_{M}(\xi_R)\cdot
  \rinput{\trace[E]{\xi_e}}\not\models C_1 = C_2\). Without loss of
  generality we can assume that \(C_1,C_2\) are built upon the input
  variables of \(\Phi_{M}(R_E(\xi))\), that is, with \(\theta:
  \set{v_{i_j}\mapsto v_j}_{1\le j\le k}\), where \(i_j\) is the
  \(j\)th input step of \trace[M]{\xi_R}:
  \[
    \left\lbrace
      \begin{array}{l}
        \rinput{\trace[M]{\xi_R}}\models C_1\theta = C_2\theta\\
        \rinput{\Phi_{M}(\xi_R)\cdot \rinput{\trace[E]{\xi_e}}}\not\models C_1\theta = C_2\theta\\
      \end{array}
    \right.
  \]
  Since \(I_M\) is an implementation of \(M\), by definition the
  second assertion is equal to \(\rinput{\trace[E]{\xi_e}}\not\models
  C_1\theta = C_2\theta\). By definition of a monitor, we have
  \(\rinput{\trace[M]{\xi_R}}=\rinput{\trace[P]{\xi_R}}\). Thus, we
  have:
  \[
    \left\lbrace
      \begin{array}{l}
        \rinput{\trace[P]{\xi_R}}\models C_1\theta = C_2\theta\\
        \rinput{\trace[E]{\xi_R}}\not\models C_1\theta = C_2\theta\\
      \end{array}
    \right.
  \]
  Hence \(\rinput{\trace[E]{\xi_R}}\) is not a refinement of
  \(\rinput{\trace[P]{\xi_R}}\), and thus \(\Phi_{P}(\xi_R)\) cannot
  be a prudent implementation of \trace[P]{\xi_R}.
\end{proof}

\begin{definition}{(Execution Log)\label{def:log:execution}}
  Let \(P=(\Xi_P,\trace[P]{})\) be a protocol, \(I_P=(\Xi_P,\Phi_P)\)
  be an implementation of \(P\), \(E=(\Xi_E,\trace[E]{},R_E)\) be an
  execution of \(I_P\), \(<_E\) be an arbitrary total order on the
  participants, and \(I_M=(\Sigma_P,\phi_{M})\) be an implementation
  of a monitor \(M\) of \(P\).  The \emph{execution log} of \(E\) for
  monitor \(M\) is the concatenation of the traces:
  \[
    \routput{\phi_M(R_E(\xi_e)) \cdot \rinput{\trace[E]{\xi_e}}}
  \]
  for \(\xi_e\in\Xi_E\) such that \(R_E(\xi_e) \neq I\) in the
  increasing order with respect to \(<_E\).
\end{definition}

\begin{proposition}{\label{prop:unique:execution:log}}
  Let \(P=(\Xi_P,\trace[P]{})\) be a protocol, \(I_P=(\Xi_P,\Phi_P)\)
  be an implementation of \(P\), \(E=(\Xi_E,\trace[E]{},R_E)\) be an
  execution of \(I_P\), \(<_E\) be an arbitrary total order on the
  participants, and \(I_M=(\Xi_P,\Phi_{M})\) be an implementation of a
  monitor \(M\) of \(P\).  Then there exists a unique execution log of
  \(E\) for \(M\).
\end{proposition}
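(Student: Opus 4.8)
The plan is to split the claim into existence and uniqueness and to observe that uniqueness carries essentially all the content, existence reducing to the definedness of finitely many evaluations. First I would record the structural facts that remove every apparent degree of freedom. By Definition~\ref{def:protocol} the strand set $\Xi_E$ is finite, so the set of honest participants $\{\xi_e\in\Xi_E \mid R_E(\xi_e)\neq I\}$ is a finite set on which $<_E$ is a total order; hence the concatenation in Definition~\ref{def:log:execution} ranges over a finite, linearly ordered index set, and once each component trace is pinned down the concatenation is a single well-defined finite trace with no residual ordering ambiguity.

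For uniqueness, the heart of the matter, I would show that each component $\routput{\phi_M(R_E(\xi_e))\cdot\rinput{\trace[E]{\xi_e}}}$ is a deterministic function of the fixed data $E$ and $I_M$. The key observation is that the evaluation $\varphi\cdot\Lambda$ of Definition~\ref{def:evaluation}, whenever defined, is a single trace: each sent entry equals $C_i\cdot\theta\sigma_\Lambda$ and each received entry equals $v_i\cdot\theta\sigma_\Lambda$, so every message is fixed by applying the input-determined substitution $\theta\sigma_\Lambda$ to the contexts of the frame. The satisfiability of the test system $S$ acts only as the boolean acceptance guard and never selects among competing outputs, so no appeal to a particular satisfying substitution is needed --- this is the subtle point that could otherwise threaten uniqueness. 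As $\routput{\cdot}$ and concatenation are themselves deterministic, each component, and therefore the whole log, is uniquely determined.

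For existence I must check that every component evaluation is actually defined, i.e. that $\phi_M(R_E(\xi_e))$ accepts $\rinput{\trace[E]{\xi_e}}$ for each honest $\xi_e$. First I would match shapes: since $\trace[E]{\xi_e}$ is an execution of $\Phi_P(R_E(\xi_e))$ (Definition~\ref{def:protocol:execution}), that frame implements both $\trace[E]{\xi_e}$ and $\trace[P]{R_E(\xi_e)}$, which forces the two traces to share polarity structure and length; combined with the monitor condition $\rinput{\trace[M]{R_E(\xi_e)}}=\rinput{\trace[P]{R_E(\xi_e)}}$ (Definition~\ref{def:protocol:monitor}) this makes $\rinput{\trace[E]{\xi_e}}$ an input of the correct length for $\phi_M(R_E(\xi_e))$. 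The main obstacle is then acceptance. Here I would invoke the standing convention that implementations are prudent: $\Phi_P(R_E(\xi_e))$ accepting $\rinput{\trace[E]{\xi_e}}$ forces, by Definition~\ref{def:eq:prudent}, that $\rinput{\trace[E]{\xi_e}}$ refines $\rinput{\trace[P]{R_E(\xi_e)}}=\rinput{\trace[M]{R_E(\xi_e)}}$, and a monitor implementation built by the construction of Section~\ref{sec:compil:prudent} accepts exactly the refinements of its expected input. The delicate direction is precisely refinement $\Rightarrow$ acceptance, which is supplied by that construction (where $S_\Lambda\cdot\Lambda'$ is satisfiable iff $\Lambda'$ refines $\Lambda$) rather than by prudence alone; this is the step I expect to demand the most care, and it is the natural companion to Proposition~\ref{prop:unique:input}, whose hypothesis makes the same acceptance assumption explicit.
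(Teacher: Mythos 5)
Your uniqueness argument coincides with the paper's own proof: the paper likewise observes that each message of \(\phi_M(R_E(\xi_e))\cdot\rinput{\trace[E]{\xi_e}}\) is, by an easy recurrence, a context applied to earlier input messages and hence uniquely determined by \(\rinput{\trace[E]{\xi_e}}\), and that totality of \(<_E\) fixes the concatenation; your remark that the test system is only a boolean acceptance guard and never selects among competing outputs is the same point, made explicit. Where you genuinely differ is scope: the paper's proof treats the proposition purely as a well-definedness statement and never discusses whether the evaluations \(\phi_M(R_E(\xi_e))\cdot\rinput{\trace[E]{\xi_e}}\) are defined at all, whereas you also attempt to prove existence, i.e.\ acceptance of \(\rinput{\trace[E]{\xi_e}}\) by the monitor's frame. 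That extra step addresses a real omission in the paper.

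However, your existence argument as written has a gap: you discharge the direction ``refinement \(\Rightarrow\) acceptance'' by assuming that \(I_M\) was produced by the finite-basis construction of Sec.~\ref{sec:compil:prudent}. That is an assumption not present in the hypotheses --- the proposition allows \(I_M\) to be an \emph{arbitrary} implementation of \(M\). Fortunately the assumption is also unnecessary. By Definition~\ref{def:evaluation}, acceptance of a positive trace \(\Lambda\) by an active frame means exactly that the frame's fixed finite set \(S\) of context equations is satisfied by \(\Lambda\). Since \(\Phi_M(R_E(\xi_e))\) is an implementation of \(\trace[M]{R_E(\xi_e)}\), it accepts \(\rinput{\trace[M]{R_E(\xi_e)}}\), so that trace satisfies every equation of \(S\); Definition~\ref{def:refinement} then transfers satisfaction of each of these equations to any refinement of \(\rinput{\trace[M]{R_E(\xi_e)}}\), in particular to \(\rinput{\trace[E]{\xi_e}}\), which refines it by the prudence convention on \(I_P\) together with the monitor condition \(\rinput{\trace[M]{R_E(\xi_e)}}=\rinput{\trace[P]{R_E(\xi_e)}}\), exactly as you argue. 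So replace the appeal to the Sec.~\ref{sec:compil:prudent} construction by this direct consequence of Definitions~\ref{def:evaluation} and~\ref{def:refinement}, and your proof is complete --- and then it establishes strictly more than the paper's, which proves uniqueness only.
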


\begin{proof}
  For each \(\xi_e\in\Sigma_E\) let \( \varphi_e=\Phi_M(R_E(\xi_e)\)
  be the active frame executed by \(\xi_e\), and let
  \(\text{in}_e=\rinput{\trace[E]{\xi_e}}\) denote the messages
  received by \(\xi_e\).  Since sent messages are built by a context
  over preceding messages an easy recurrence shows that the value of
  each message in \(\varphi_e\cdot \text{in}_e\) is uniquely defined
  by the values in \rinput{\trace[E]{\xi_e}}.  Thus
  \(\routput{\varphi_e\cdot \text{in}_e}\) is uniquely defined for
  each participant \(\xi_e\in\Xi_E\).  Since the order \(<_E\) is
  total the concatenation of these traces is unique.
\end{proof}

Since the ordering \(<_E\) is arbitrary, we usually omit any reference
to it. By Prop.~\ref{prop:unique:execution:log} the execution log
depends only on the monitor, not on its implementation. Accordingly we
denote it \(\log_{I_P,M}(E)\).  Assuming there exists a
\call{D}-reachability algorithm, it is possible to compute an
implementation of \(M\) whenever \(M\) is executable. Thus given a
monitor \(M\) the function \(\log_{I_P,M}(E)\) can be effectively
computed.

\section{Generating an attack-preventing monitor}
\label{sec:generating}
\subsection{Attack presentation}
\label{subsec:attack:presentation}

In our setting attacks are simply specified as protocol executions 
without  reference to any violated security
property.  The flexibility entailed by this choice however implies
that, in order to prevent the given execution, one also has to provide
what should have been the correct execution for the subset of
participants involved in the attack. This setting leads to the
definition of an attack presentation sharing the same set of
participants playing the same roles, but having different traces.

\begin{definition}{(Attack definition)\label{def:attack:definition}}
  Let \(I_P=(\Xi_P,\Phi_P)\) be a protocol implementation. An
  \emph{attack definition} on \(I_P\) is a tuple
  \((\Xi_E,\trace[A]{},\trace[N]{},R_E)\) such that
  \((\Xi_E,\trace[A]{},R_E)\) is an execution of \(I_P\) and
  \((\Xi_E\setminus R_E^{-1}(I),\trace[N]{},R_E)\) is an honest
  execution of \(I_P\).
\end{definition}

Given an attack definition \((\Sigma_E,\trace[A]{},\trace[N]{},R_E)\),
\((\Sigma_E,\trace[A]{},R_E)\) refers to the \emph{attack execution}
while \((\Sigma_E,\trace[N]{},R_E)\) refers to the \emph{normal
  execution} of the protocol expected for the honest participants
involved. Though this is not enforced by the definition and not needed
in the rest of this paper, it is expected that the initial segments of
the traces corresponding to the initial knowledge and the generation
of nonces should be the same for each participant in the two
executions.

\begin{definition}{(Attack presentation)\label{def:attack:presentation}}
  Let \(P=(\Xi_P,\trace[P]{})\) be a protocol, \(I_P=(\Xi_P,\Phi_P)\)
  be an implementation of \(P\), \(M=(\Xi_P,\trace[M]{})\) be a
  monitor of \(P\), and
  \(\mathcal{A}=(\Xi_E,\trace[A]{},\trace[N]{},R_E)\) be an attack
  definition on \(I_P\).
  Then the \emph{presentation of \call{A} to \(M\)}
  is the couple
  \((\log_{I_P,M}((\Xi_E\setminus R_E^{-1}(I),\trace[A]{},R_E)),\log_{I_P,M}((\Xi_E\setminus R_E^{-1}(I),\trace[N]{},R_E)))\)
\end{definition}

\paragraph{Detectable attacks.}
We note that the two traces in an attack presentation may be
equivalent. In this case, no test performed by the monitor could
enable it to distinguish between the normal and the attack execution,
and the latter would not be preventable. We say that an attack
\call{A} is \emph{detectable} by the monitor \(M\) if its presentation
\((\Lambda,\Lambda')\) to \(M\) is such that \(\Lambda\) and
\(\Lambda'\) are not equivalent.

This definition leads to the problem of deciding whether an attack is
detectable by a monitor.

\begin{decisionproblem}{\ensuremath{\mathop{\it AttackDetectability}_{\mathcal{D}}(s,s')}}
  \begin{description}
  \item[Input:] The presentation \((\Lambda,\Lambda')\) of an attack \call{A} on the protocol
    implementation \(I_P\) with a monitor \(M\);
  \item[Output:] \textsc{yes} if \call{A} is detectable by \(M\)
  \end{description}  
\end{decisionproblem}

This problem is related to the classic static equivalence problem by
the following theorem, proved in the appendix.
\begin{theorem}\label{theo:decidability}
  Let \(\mathcal{D}\)
  be a deduction system, and \(\mathcal{N}\)
  be the deduction system of Ex.~\ref{ex:nonces}.
  Then \ensuremath{\mathop{\it
      AttackDetectability}_{\mathcal{D}\cup\mathcal{N}}} on strands
  that do not contain symbols of \(\Constants_{\mathcal{N}}\) is
  polynomial-time reducible to \ensuremath{\mathop{\it
      StaticEquivalence}_{\mathcal{D}}}.
\end{theorem}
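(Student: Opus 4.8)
The plan is to read detectability as the failure of static equivalence and then to strip off the nonce-generation capability \(\mathcal{N}\) by turning the generable nonces into revealed free constants. By definition an attack with presentation \((\Lambda,\Lambda')\) is detectable by \(M\) exactly when \(\Lambda\) and \(\Lambda'\) are not equivalent, i.e. when they do not refine one another; this is precisely the negation of static equivalence of the two frames under the deduction system \(\mathcal{D}\cup\mathcal{N}\). So it suffices to reduce the equivalence question for \((\Lambda,\Lambda')\) under \(\mathcal{D}\cup\mathcal{N}\) to \(\mathop{\it StaticEquivalence}_{\mathcal{D}}\) and negate the oracle's answer. The only gap between the two theories is that \(\mathcal{D}\cup\mathcal{N}\) lets the distinguisher build any constant of \(\Constants_{\mathcal{N}}\) as a public constructor, whereas \(\mathcal{D}\) has no such capability.

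The reduction is as follows. Let \(N\subseteq\Constants_{\mathcal{N}}\) be the finite set of nonce constants actually occurring in \(\Lambda\) or \(\Lambda'\); since the strands of the attack definition contain no symbol of \(\Constants_{\mathcal{N}}\), these are exactly the nonces produced during the two executions, and \(|N|\) is linear in the size of the presentation. I reinterpret every \(n\in N\) as a fresh free constant of \(\Constants\) (the theory \(\call{E}\) mentions none of them, so this changes nothing about \(=_{\call{E}}\); it only removes the ability to \emph{generate} \(n\)). I then form augmented frames \(\hat\Lambda,\hat\Lambda'\) by appending to \emph{both} traces one extra sent message revealing each \(n\in N\), using the very same constant \(n\) in both. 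Now \(\hat\Lambda,\hat\Lambda'\) are genuine \(\mathcal{D}\)-frames, the construction is polynomial, and I output \((\hat\Lambda,\hat\Lambda')\) to the \(\mathop{\it StaticEquivalence}_{\mathcal{D}}\) oracle. Revealing a nonce that occurs in only one of the two traces is sound: the \(\mathcal{D}\cup\mathcal{N}\)-distinguisher can already generate that nonce in either execution, so a dedicated position holding it in both frames faithfully mirrors that capability.

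Correctness splits into two directions. The easy direction shows that a \(\mathcal{D}\)-test distinguishing \(\hat\Lambda\) from \(\hat\Lambda'\) lifts to a \(\mathcal{D}\cup\mathcal{N}\)-test distinguishing \(\Lambda\) from \(\Lambda'\): every access to a revealed position is replaced by a generation of the corresponding nonce, turning \(\mathcal{D}\)-contexts over \(\hat\Lambda\) into \((\mathcal{D}\cup\mathcal{N})\)-contexts over \(\Lambda\). For the converse I start from a \((\mathcal{D}\cup\mathcal{N})\)-context pair \((C_1,C_2)\) witnessing non-equivalence, first rewrite every occurrence of a nonce of \(N\) into the matching revealed position, and am left with contexts whose only extra symbols are \emph{fresh} nonces occurring in neither frame. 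To eliminate these I use the standard fact that a free constant appearing neither in \(\call{E}\) nor in a frame behaves as a universally quantified variable: abstracting the fresh nonces to fresh variables turns the equality that holds on \(\hat\Lambda\) into a universally valid identity (so it survives any instantiation) while the disequality on \(\hat\Lambda'\) persists, and I then instantiate those variables by attacker-buildable \(\mathcal{D}\)-terms to recover a genuine \(\mathcal{D}\)-test on \(\hat\Lambda,\hat\Lambda'\).

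The main obstacle is exactly this last instantiation: I must guarantee that the recovered \(\mathcal{D}\)-test still witnesses the disequality on \(\hat\Lambda'\), rather than accidentally collapsing it to an equality. I expect to handle it through the symmetry of fresh constants---any injection between fresh nonces extends to an automorphism of \((\gsig{\call{F}},=_{\call{E}})\) fixing both frames---which bounds the number of genuinely useful fresh nonces by a quantity computable from \((\Lambda,\Lambda')\); if needed I reveal that bounded number of additional identical fresh constants in \(\hat\Lambda\) and \(\hat\Lambda'\) so that the instantiation always has distinct generic values to use. Establishing this small-witness bound for static \emph{in}equivalence is the technical heart of the argument; the reformulation, the frame augmentation and the easy direction are routine bookkeeping.
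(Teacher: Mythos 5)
There is a genuine gap, and it sits at two places. First, your reduction never says which constants are \(\nu\)-bound (hidden) in the frames you hand to the oracle, and this is precisely the point the paper's construction turns on: it maps each trace \(s\) of the presentation to the frame \(\nu\Const{s}.s\), hiding \emph{every} free constant occurring in the traces. This matters because in the paper's refinement/equivalence notion the distinguishing contexts contain no free constants at all (\(\Const{C}=\emptyset\)), whereas the \(\mathop{\it StaticEquivalence}_{\mathcal{D}}\) oracle may, via the assignment \(\theta\), use any constant that is not hidden. If you hide nothing (the reading your text suggests), the reduction is unsound: for \(\Lambda=(!\penc{a}{k})\) and \(\Lambda'=(!\penc{b}{k})\), the oracle distinguishes the frames with \(C_1=\penc{x}{y}\), \(C_2=v_1\), \(\theta(x)=a\), \(\theta(y)=k\), even though no \(\mathcal{D}\cup\mathcal{N}\)-context pair separates the traces (neither \(a\), \(b\), \(k\) nor \(k^{-1}\) is available to a context), so an undetectable attack would be declared detectable. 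Your construction addresses only the nonces of \(\Constants_{\mathcal{N}}\) --- which, under the theorem's hypothesis, do not occur in the traces at all, so your set \(N\) is empty and the augmentation step is vacuous --- while leaving the genuinely dangerous constants (the secrets occurring in the traces) visible to the oracle.

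Second, the step you yourself flag as ``the technical heart'' --- eliminating fresh nonces from a distinguishing \(\mathcal{D}\cup\mathcal{N}\)-context pair via an abstraction-and-instantiation argument with a small-witness bound --- is left unproved, and it is exactly the step that the paper's choice of target problem makes trivial. In the paper's formulation of \(\mathop{\it StaticEquivalence}_{\mathcal{D}}\), a test is a pair of constant-free public contexts \emph{together with} a map \(\theta\) sending their variables either to frame positions or to constants outside the hidden sets; the ability to use arbitrarily many distinct fresh values is thus built into the oracle. Consequently both directions of correctness are plain syntactic translations: each nonce constant \(c\in\Constants_{\mathcal{N}}\) occurring in a detecting context is replaced by a fresh variable \(x_c\) with \(\theta(x_c)\) a fresh free constant, and conversely each fresh constant chosen by \(\theta\) is replaced by a constant of \(\Constants_{\mathcal{N}}\). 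No automorphism argument, no bound on the number of useful fresh nonces, and no re-instantiation by attacker-buildable terms is needed; equalities and disequalities are preserved verbatim because the constants involved are free and absent from both \(\call{E}\) and the frames. So your overall strategy (detectability as static inequivalence, nonces traded for fresh free constants) is the right one, but as written the proof is incomplete at its central step and, without the \(\nu\Const{s}\) hiding, incorrect as a reduction.
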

The latter \ensuremath{\mathop{\it StaticEquivalence}_{\mathcal{D}}}
decision problem is well-studied and in most cases of deduction
systems of interest was found to be decidable, which implies that the
\ensuremath{\mathop{\it
    AttackDetectability}_{\mathcal{D}\cup\mathcal{N}}} problem is also
decidable for most deduction systems of interest.

\subsection{Monitor Synthesis}
\label{subsec:monitor:synthesis}

In our setting an attack definition relies on humans to specify also
the intended execution, but this execution is not present when
searching whether a concrete execution is an attack. Thus we need to
synthesize tests that will detect whether an execution is an attack
by relying solely on the contents of the actual execution.  

Let \((\Lambda,\Lambda')\) be a detectable attack presentation. By
definition there exists at least one equation \(C_1\unif{} C_2\)
either in \(P_\Lambda^f\) or in \(P_{\Lambda'}^f\) that is not
satisfied by the other trace. We add it to the tests of the monitor.
If the equation is in \(P_s^f\) the monitor interrupts the protocol if
it is not satisfied, whereas if it is in \(P_{s'}^f\) the monitor
interrupts the protocol if it is satisfied.

\section{Attack detection in practice}
\label{sec:distributed:implementation}



We present in this section a simple example, the ISO/IEC 9797-1
protocol, especially its manual authentication mecanism
7a described in~\cite{these-robin}. The normal run of the protocol is, after
a human user sent \(D\)
and \(R\) to the two devices \(A\) and \(B\):
\[
\begin{array}{r@{~:~}l}
  \multicolumn 1l{\ensuremath{A} \textbf{ knows }}& A,B,D,R\\
  \multicolumn 1l{\ensuremath{B} \textbf{ knows }}& A,B,D,R\\
  A \to B & h(A,D,k_A,R)\\
  B \to A & h(B,D,k_B,R)\\
  A \to B & k_A\\
  B \to A & k_B\\
\end{array}
\]
A dishonest participant \(i\) can sent back the first message directly
to a honest participant \(a\) willing to play the rôle \(A\), and
completely impersonate \(B\) during the session:
\[
\begin{array}{r@{~:~}l}
  a \to I & h(A,D,k_A,R)\\
  I \to a & h(A,D,k_A,R)\\
  a \to I & k_A \\
  I \to a & k_A\\
\end{array}
\]
We let \(P=(\set{A,B},\trace[P]{})\) be the definition of the
protocol, \(E=(\set{a,i},\trace[E]{}, \set{a\mapsto A, i\mapsto I})\)
be the execution of the protocol \(P\) describing the attack, and
\(M=(\set{A,B},\trace[M]{})\) with:
\[
  \left\lbrace
  \begin{array}{ll}
    \trace[M]{A} &= (?A,?B,?D,?R,?k_A,!h(A,D,k_A,R),?h(A,D,k_B,R),!h(A,D,k_B,R),?k_B)\\
    \trace[M]{B} & = \rinput{\trace[P]{B}}\\
  \end{array}
\right.
\]
The implementation of this monitor would be:
\[
  \begin{array}{rl}
  \Phi_M(A) = & (? v_1 ; ? v_2 ; ? v_3 ; ? v_4 ; ! v_6 \text{ with } \set{x_6 \unif{} h(x_1,x_3,x_5,x_4)}; ? v_7  , ! v_8 \text{ with } \set{x_8 \unif{} x_7} ) \\
  \end{array}
\]
The two logs for the regular execution and the attack are respectively, with this implementation:
\[
  \left\lbrace
    \begin{array}[c]{ll}
      (  ! h(A,D,k_A,R) ; !h(B,D,k_B,R))&\hspace*{1cm}\text{(normal)}\\
      (  ! h(A,D,k_A,R) ; !h(B,D,k_A,R))&\hspace*{1cm}\text{(attack)}\\
    \end{array}
  \right.
\]
and the test \(x_1 \unif{} x_2\) is satisfied by the log of attack
trace but not by the log of the normal execution.  Thus the monitor
can reject the attack from the log when this equality is satisfied. A
more robust monitor would send the last two messages \(k_A\) and
\(k_B\) as in that case we know of no other attack even when keys are
guessable or the hash function is weak~\cite{robin:csf17}.

\section{Conclusion}
\label{sec:conclusion}
In future work we plan to generate monitor implementations from
several roles, and to study test simplification techniques for
efficiency. We also need to extend the monitor construction in order
to protect a protocol from all the refinements of an attack.


\begin{thebibliography}{10}
\providecommand{\bibitemdeclare}[2]{}
\providecommand{\surnamestart}{}
\providecommand{\surnameend}{}
\providecommand{\urlprefix}{Available at }
\providecommand{\url}[1]{\texttt{#1}}
\providecommand{\href}[2]{\texttt{#2}}
\providecommand{\urlalt}[2]{\href{#1}{#2}}
\providecommand{\doi}[1]{doi:\urlalt{http://dx.doi.org/#1}{#1}}
\providecommand{\bibinfo}[2]{#2}

\bibitemdeclare{article}{AbadiC06}
\bibitem{AbadiC06}
\bibinfo{author}{Mart{\'{\i}}n \surnamestart Abadi\surnameend} \&
  \bibinfo{author}{V{\'{e}}ronique \surnamestart Cortier\surnameend}
  (\bibinfo{year}{2006}): \emph{\bibinfo{title}{Deciding knowledge in security
  protocols under equational theories}}.
\newblock {\sl \bibinfo{journal}{Theor. Comput. Sci.}}
  \bibinfo{volume}{367}(\bibinfo{number}{1-2}), pp. \bibinfo{pages}{2--32},
  \doi{10.1016/j.tcs.2006.08.032}.

\bibitemdeclare{inproceedings}{ArsacBCC09}
\bibitem{ArsacBCC09}
\bibinfo{author}{Wihem \surnamestart Arsac\surnameend},
  \bibinfo{author}{Giampaolo \surnamestart Bella\surnameend},
  \bibinfo{author}{Xavier \surnamestart Chantry\surnameend} \&
  \bibinfo{author}{Luca \surnamestart Compagna\surnameend}
  (\bibinfo{year}{2009}): \emph{\bibinfo{title}{Validating Security Protocols
  under the General Attacker}}.
\newblock In \bibinfo{editor}{Pierpaolo \surnamestart Degano\surnameend} \&
  \bibinfo{editor}{Luca \surnamestart Vigan{\`{o}}\surnameend}, editors: {\sl
  \bibinfo{booktitle}{Foundations and Applications of Security Analysis,
  {ARSPA-WITS} 2009, York, UK, March 28-29, 2009, Revised Selected Papers}},
  {\sl \bibinfo{series}{Lecture Notes in Computer Science}}
  \bibinfo{volume}{5511}, \bibinfo{publisher}{Springer}, pp.
  \bibinfo{pages}{34--51}, \doi{10.1007/978-3-642-03459-6\_3}.

\bibitemdeclare{inproceedings}{BellaBM03}
\bibitem{BellaBM03}
\bibinfo{author}{Giampaolo \surnamestart Bella\surnameend},
  \bibinfo{author}{Stefano \surnamestart Bistarelli\surnameend} \&
  \bibinfo{author}{Fabio \surnamestart Massacci\surnameend}
  (\bibinfo{year}{2003}): \emph{\bibinfo{title}{A Protocol's Life After
  Attacks...}}
\newblock In \bibinfo{editor}{Bruce \surnamestart Christianson\surnameend},
  \bibinfo{editor}{Bruno \surnamestart Crispo\surnameend},
  \bibinfo{editor}{James~A. \surnamestart Malcolm\surnameend} \&
  \bibinfo{editor}{Michael \surnamestart Roe\surnameend}, editors: {\sl
  \bibinfo{booktitle}{Security Protocols, 11th International Workshop,
  Cambridge, UK, April 2-4, 2003, Revised Selected Papers}}, {\sl
  \bibinfo{series}{Lecture Notes in Computer Science}} \bibinfo{volume}{3364},
  \bibinfo{publisher}{Springer}, pp. \bibinfo{pages}{3--10},
  \doi{10.1007/11542322\_2}.

\bibitemdeclare{article}{IPL10}
\bibitem{IPL10}
\bibinfo{author}{Yannick \surnamestart Chevalier\surnameend} \&
  \bibinfo{author}{Micha{\"{e}}l \surnamestart Rusinowitch\surnameend}
  (\bibinfo{year}{2010}): \emph{\bibinfo{title}{Compiling and securing
  cryptographic protocols}}.
\newblock {\sl \bibinfo{journal}{Inf. Proc. Lett.}}
  \bibinfo{volume}{110}(\bibinfo{number}{3}), pp. \bibinfo{pages}{116--122},
  \doi{10.1016/j.ipl.2009.11.004}.

\bibitemdeclare{inproceedings}{robin:csf17}
\bibitem{robin:csf17}
\bibinfo{author}{St{\'{e}}phanie \surnamestart Delaune\surnameend},
  \bibinfo{author}{Steve \surnamestart Kremer\surnameend} \&
  \bibinfo{author}{Ludovic \surnamestart Robin\surnameend}
  (\bibinfo{year}{2017}): \emph{\bibinfo{title}{Formal Verification of
  Protocols Based on Short Authenticated Strings}}.
\newblock In: {\sl \bibinfo{booktitle}{30th {IEEE} Computer Security
  Foundations Symposium, {CSF} 2017, Santa Barbara, CA, USA, August 21-25,
  2017}}, pp. \bibinfo{pages}{130--143}, \doi{10.1109/CSF.2017.26}.

\bibitemdeclare{incollection}{DBLP:books/el/RV01/DershowitzP01}
\bibitem{DBLP:books/el/RV01/DershowitzP01}
\bibinfo{author}{Nachum \surnamestart Dershowitz\surnameend} \&
  \bibinfo{author}{David~A. \surnamestart Plaisted\surnameend}
  (\bibinfo{year}{2001}): \emph{\bibinfo{title}{Rewriting}}.
\newblock In: {\sl \bibinfo{booktitle}{Handbook of Automated Reasoning}},
  \bibinfo{publisher}{Elsevier and {MIT} Press}, pp. \bibinfo{pages}{535--610},
  \doi{10.1016/B978-044450813-3/50011-4}.

\bibitemdeclare{article}{FiazzaPV15}
\bibitem{FiazzaPV15}
\bibinfo{author}{Maria{-}Camilla \surnamestart Fiazza\surnameend},
  \bibinfo{author}{Michele \surnamestart Peroli\surnameend} \&
  \bibinfo{author}{Luca \surnamestart Vigan{\`{o}}\surnameend}
  (\bibinfo{year}{2015}): \emph{\bibinfo{title}{Defending Vulnerable Security
  Protocols by Means of Attack Interference in Non-Collaborative Scenarios}}.
\newblock {\sl \bibinfo{journal}{Front. {ICT}}} \bibinfo{volume}{2015},
  \doi{10.3389/fict.2015.00011}.

\bibitemdeclare{inproceedings}{Lowe99}
\bibitem{Lowe99}
\bibinfo{author}{Mei~Lin \surnamestart Hui\surnameend} \&
  \bibinfo{author}{Gavin \surnamestart Lowe\surnameend} (\bibinfo{year}{1999}):
  \emph{\bibinfo{title}{Safe Simplifying Transformations for Security
  Protocols}}.
\newblock In: {\sl \bibinfo{booktitle}{Proceedings of the 12th {IEEE} Computer
  Security Foundations Workshop, {CSFW} 1999, Mordano, Italy, June 28-30,
  1999}}, \bibinfo{publisher}{{IEEE} Computer Society}, pp.
  \bibinfo{pages}{32--43}, \doi{10.1109/CSFW.1999.779760}.

\bibitemdeclare{inproceedings}{ASIACCS}
\bibitem{ASIACCS}
\bibinfo{author}{Zhiwei \surnamestart Li\surnameend} \&
  \bibinfo{author}{Weichao \surnamestart Wang\surnameend}
  (\bibinfo{year}{2012}): \emph{\bibinfo{title}{Towards the attacker's view of
  protocol narrations (or, how to compile security protocols)}}.
\newblock In \bibinfo{editor}{Heung~Youl \surnamestart Youm\surnameend} \&
  \bibinfo{editor}{Yoojae \surnamestart Won\surnameend}, editors: {\sl
  \bibinfo{booktitle}{7th {ACM} Symposium on Information, Computer and
  Communications Security, {ASIACCS} '12, Seoul, Korea, May 2-4, 2012}},
  \bibinfo{publisher}{{ACM}}, pp. \bibinfo{pages}{44--45},
  \doi{10.1145/2414456.2414481}.

\bibitemdeclare{article}{lowe-nspk}
\bibitem{lowe-nspk}
\bibinfo{author}{Gavin \surnamestart Lowe\surnameend} (\bibinfo{year}{1996}):
  \emph{\bibinfo{title}{Breaking and Fixing the Needham-Schroeder Public-Key
  Protocol Using {FDR}}}.
\newblock {\sl \bibinfo{journal}{Software - Concepts and Tools}}
  \bibinfo{volume}{17}(\bibinfo{number}{3}), pp. \bibinfo{pages}{93--102},
  \doi{10.1007/3-540-61042-1_43}.

\bibitemdeclare{article}{NS78}
\bibitem{NS78}
\bibinfo{author}{Roger~M. \surnamestart Needham\surnameend} \&
  \bibinfo{author}{Michael~D. \surnamestart Schroeder\surnameend}
  (\bibinfo{year}{1978}): \emph{\bibinfo{title}{Using Encryption for
  Authentication in Large Networks of Computers}}.
\newblock {\sl \bibinfo{journal}{Commun. {ACM}}}
  \bibinfo{volume}{21}(\bibinfo{number}{12}), pp. \bibinfo{pages}{993--999},
  \doi{10.1145/359340.359342}.

\bibitemdeclare{phdthesis}{these-robin}
\bibitem{these-robin}
\bibinfo{author}{Ludovic \surnamestart Robin\surnameend}
  (\bibinfo{year}{2018}): \emph{\bibinfo{title}{V{\'{e}}rification formelle de
  protocoles bas{\'{e}}s sur de courtes chaines authentifi{\'{e}}es. (Formal
  verification of protocols based on short authenticated strings)}}.
\newblock Ph.D. thesis, \bibinfo{school}{University of Lorraine, Nancy,
  France}.
\newblock \urlprefix\url{https://tel.archives-ouvertes.fr/tel-01767989}.

\bibitemdeclare{inproceedings}{machiavelli}
\bibitem{machiavelli}
\bibinfo{author}{Paul~F. \surnamestart Syverson\surnameend} \&
  \bibinfo{author}{Catherine~A. \surnamestart Meadows\surnameend}
  (\bibinfo{year}{2000}): \emph{\bibinfo{title}{Dolev-Yao is no better than
  Machiavelli}}.
\newblock In: {\sl \bibinfo{booktitle}{Proceedings of the first {\em Workshop
  on Issues in the Theory of Security (WITS'00)}}}, \doi{10.21236/ADA464936}.

\bibitemdeclare{inproceedings}{strand}
\bibitem{strand}
\bibinfo{author}{F.~Javier \surnamestart Thayer\surnameend},
  \bibinfo{author}{Jonathan~C. \surnamestart Herzog\surnameend} \&
  \bibinfo{author}{Joshua~D. \surnamestart Guttman\surnameend}
  (\bibinfo{year}{1998}): \emph{\bibinfo{title}{Strand Spaces: Why is a
  Security Protocol Correct?}}
\newblock In: {\sl \bibinfo{booktitle}{Security and Privacy - 1998 {IEEE}
  Symposium on Security and Privacy, Oakland, CA, USA, May 3-6, 1998,
  Proceedings}}, \bibinfo{publisher}{{IEEE} Computer Society}, pp.
  \bibinfo{pages}{160--171}, \doi{10.1109/SECPRI.1998.674832}.

\end{thebibliography}

\appendix

\section{Relation with the notion of static equivalence}

The notions of equivalence \textit{wrt} the refinement relation and
static equivalence are strongly related. The different setting is
justified by the different handling of nonces: in~\cite{AbadiC06}
contexts can contain any constant, so the secret constants in a trace
have to be protected using \(\pi\)-calculus'
\(\nu\)
operator, while we disallow non-public constants in contexts, which
means that no constants can be used but the ones in the deduction
system or those published (explicitly or implicitly) in the sequence
of messages. We prove in Theo.~\ref{theo:decidability} that the two
notions are identical modulo the generation of new constants with the
deduction system \call{N} of Ex.~\ref{ex:nonces}.

The \textit{AttackDetectability} problem defined in this paper is new,
but it is strongly related to the static equivalence problem. In order to
show this relation, let us introduce \emph{frames}, which are strands
with a hidden set of constants.

\begin{definition}{(Frames,~\cite{AbadiC06})}
  A \emph{frame} is a couple \((\tilde n,s)\) where \(\tilde n\) is a set of
  constants and \(s\) is a positive trace, and is usually denoted \(\nu
  \tilde n.s\).
\end{definition}

Technically the definition in~\cite{AbadiC06} replaces \emph{positive
  trace} \(s\)
by a substitution of domain \(x_1,\ldots,x_n\)
that we have noted \(\sigma_s\).
The application of a context \(C\)
on a frame \(\nu\tilde n.s\)
is equal to the application of \(C\)
on \(s\).
We are now ready to define the static equivalence problem for a
deduction system \(\mathcal{D}\).

\begin{decisionproblem}{\ensuremath{\mathop{\it StaticEquivalence}_{\mathcal{D}}(\varphi_1,\varphi_2)}}
  \begin{description}
  \item[Input:] Two frames \(\varphi_i = \nu \tilde n_i.s_i\) for \(i=1,2\)
  \item[Output:] \textsc{yes} if the frames have an equal length and
    for all pair \(C_1,C_2\) of public contexts and all function
    \(\theta: \Var{C_1}\cup\Var{C_2}\to \set{x_1,\ldots,x_n}\cup
    (\Constants \setminus (\tilde n_1 \cup\tilde n_2))\) we have \(
    C_1\theta \sigma_{s_1}  = C_2\theta \sigma_{s_1} \) if, and only if, \(
    C_1\theta \sigma_{s_2} =_{\mathcal{E}} C_2\theta \sigma_{s_2} \).
  \end{description}  
\end{decisionproblem}

Attack detectability is related to static equivalence with the following theorem:

\begin{trivlist}
\item \textbf{Theorem~\ref{theo:decidability}.}~\sl
  Let \(\mathcal{D}\)
  be a deduction system, and \(\mathcal{N}\)
  be the deduction system of Ex.~\ref{ex:nonces}.
  Then \ensuremath{\mathop{\it
      AttackDetectability}_{\mathcal{D}\cup\mathcal{N}}} on strands
  that do not contain symbols of \(\Constants_{\mathcal{N}}\) is
  polynomial-time reducible to \ensuremath{\mathop{\it
      StaticEquivalence}_{\mathcal{D}}}.
\end{trivlist}

\begin{proof}
   Given a trace \(s\)
  we let \(\varphi_s=\nu \Const{s}.[s_1,\ldots,s_n]\).
  This construction is clearly polynomial time. Let \(t_1,t_2\)
  be the two traces in the  presentation of the attack \(\mathcal{A}\)
  on \(I_P\) to \(M\).

  First, if \(t_1\) and \(t_2\) are of different length or have different label sequence, one can
  respond to the \ensuremath{\mathop{\it AttackDetectability}} in
  polynomial time. So let us assume the two strands have the same
  length and the same label sequence. Also, we assume that \(t_1,t_2\)
  do not contain the symbols of the \(\mathcal{N}\) deduction system.

  Let us prove that \(t_1\)
  and \(t_2\)
  are discernable wrt the deduction system
  \(\mathcal{D}\cup\mathcal{N}\)
  if, and only if, the frames \(\varphi_{s_1},\varphi_{s_2}\)
  are not statically equivalent wrt the deduction system
  \(\mathcal{D}\).

  First let us assume that the attack presentation \((t_1,t_2)\)
  is detectable, and wlog assume that \(t_1\) does not refine \(t_1\)
  for the deduction system \(\mathcal{D}\cup\mathcal{N}\).
  Thus there exists two \(\mathcal{D}\cup\mathcal{N}\)-contexts
  such that
  \(C_1 t_1 =_{\mathcal{E}} C_2 t_1\) but \(C_1 t_2
  \not=_{\mathcal{E}} C_2 t_2\). Since we assume that constants
  occurring in \(s_1,s_2\) are away from \(\Constants_{\mathcal{N}}\),
  we construct \(C_1',C_2'\) and \(\theta'\) as follows. For each
  constant \(c\in\Constants_{\mathcal{N}}\) occurring in \(C_1\) or \(C_2\):
  \begin{itemize}
  \item replace in the contexts \(c\) with a new variable \(x_c\);
  \item define \(\theta\) as follows:
    \[
      \theta(x)=\left\lbrace
        \begin{array}[c]{ll}
          x_i & \text{ if } x \in\Var{C_1}\cup\Var{C_2}\\
          c & \text{ if } x =x_c\\
        \end{array}
        \right.
    \]
  \end{itemize}
  By construction \(C_1',C_2'\) are \(\mathcal{D}\)-public
  contexts and \(\theta\) maps each variable of these contexts to
  either a variable or to a constant away from \(s_1,s_2\).
  Thus \(C_1',C_2'\) and \(\theta'\) are witnesses that the frames
  \(\varphi_s\) and \(\varphi_{s'}\) are not \(\mathcal{D}\)-statically equivalent.

  Conversely assume that the two frames
  \(\varphi_{s_1},\varphi_{s_2}\) are not statically equivalent. Then
  there exist \(\mathcal{D}\) contexts \(C_1,C_2\) and \(\theta:\Var{C_1}\cup\Var{C_2}\to
  \Constants \setminus (\Const{s_1}\cup\Const{s_2})\) such that wlog
  \(C_1\theta \varphi_{s_1}=_{\mathcal{E}} C_2\theta \varphi_{s_1}\).
  Replacing each free constant \(c\) by a constant in
  \(\Constants_{\mathcal{N}}\) yields appropriate contexts for the
  \(\mathcal{D}\cup \mathcal{N}\) attack detectability.
\end{proof}

\end{document}